\newcommand{\TL}{$\mathtt{\bf TL}$}
\newcommand{\ML}{$\mathtt{\bf ML}$}
\newcommand{\sccOf}{$\mathtt{sccOf}$~}
\newcommand{\sccOfx}[1]{$\mathtt{sccOf(}\mathit{#1}\mathtt{)}$}
\newcommand{\nodesOf}{$\mathtt{nodesOf}$~}
\newcommand{\nodesOfx}[1]{$\mathtt{nodesOf(}\mathit{#1}\mathtt{)}$}
\newcommand{\outArcs}{$\mathtt{outArcs}$~}
\newcommand{\outArcsx}[1]{$\mathtt{outArcs(}\mathit{#1}\mathtt{)}$}
\newcommand{\allDifferent}{$\mathtt{AllDifferent}$~}
\newcommand{\noCycle}{$\mathtt{NoCycle}$~}
\newcommand{\arborescence}{$\mathtt{Arborescence}$~}
\newcommand{\antiarborescence}{$\mathtt{AntiArborescence}$~}
\newcommand{\RP}{$\mathtt{Reduced~Path}$~}
\newcommand{\BST}{$\mathtt{Bounding~Spanning~Tree}$~}
\newcommand{\MST}{$\mathtt{Minimum~Spanning~Tree}$~}
\newcommand{\MSA}{$\mathtt{Minimum~Spanning~Arborescence}$~}
\newcommand{\WST}{$\mathtt{Weighted~Spanning~Tree}$~}
\newcommand{\tree}{$\mathtt{Tree}$~}
\newcommand{\DomReachability}{$\mathtt{DomReachability}$~}
\newcommand{\choco}{$\mathtt{CHOCO}$~}
\author{Jean-Guillaume Fages and Xavier Lorca}  
\institute{
	June 9, 2012\\
	 \'Ecole des Mines de Nantes, \\
	 LINA UMR CNRS 6241,\\ 
         FR-44307 Nantes Cedex 3, France\\
         \email{\{Jean-Guillaume.Fages,Xavier.Lorca\}@mines-nantes.fr}
}                       
\title{Improving the Asymmetric TSP\\ by Considering Graph Structure}
\begin{document}
\maketitle

\begin{abstract}
Recent works on cost based relaxations have improved Constraint Programming (CP) models for the Traveling Salesman Problem (TSP). 
We provide a short survey over solving asymmetric TSP with CP. Then, we suggest new implied propagators based on general graph properties. 
We experimentally show that such implied propagators bring robustness to pathological instances and highlight the fact that graph structure can significantly improve search heuristics behavior.
Finally, we show that our approach outperforms current state of the art results.
\end{abstract}

\section{Introduction}\label{introduction}

Given a $n$ node, $m$ arc complete directed weighted graph $G=(V,A,f: A \to \mathbb{R})$, the Asymmetric Traveling Salesman Problem~\cite{Chvatal:TSP} (ATSP) consists in finding a partial subgraph $G'=(V,A',f)$ of $G$ which forms a Hamiltonian circuit of minimum cost. This NP-hard problem is one of the most studied by the Operation Research community. It has various practical applications such as vehicle routing problems of logistics, microchips production optimization or even scheduling. 

The symmetric TSP is well handled by linear programming techniques \cite{Chvatal:TSP}. However, such methods suffer from the addition of side constraints and asymmetric cost matrix, whereas constraint programming models do not. 
Since the real world is not symmetric and industrial application often involve constraints such as time windows, precedences, loading capacities and several other constraints, improving the CP general model for solving the ATSP leads to make CP more competitive on real world routing problems.
Recent improvements on cost based relaxations~\cite{Rousseau:WCC} had a very strong impact on the ability of CP technologies to solve the TSP.
In this paper, we investigate how the graph structure can contribute to the resolution process, in order to tackle larger instances.
For this purpose, we developed usual and original filtering algorithms using classical graph structures, such as strongly connected components or dominators. We analyzed their behavior both from a quantitative (time complexity) and a qualitative (consistency level) point of view. 
Also, we experimentally show that such implied propagators bring robustness to hard instances, 
and highlight the fact that the graph structure can significantly improve the behavior of search heuristics.
Our main contribution is both a theoretical and an experimental study which lead to a robust model that outperforms state of the art results in CP.

This paper is divided into six main parts. 
Section~\ref{definitions} provides some vocabulary and notations. 
Section~\ref{sota} discusses the state of the art implied constraints. 
Next, we show in Section~\ref{rg} how the reduced graph can provide useful information for pruning and improving existing models. 
In Section~\ref{hk} we provide some improvements about the implementation of the Held and Karp method within a directed context.
Section~\ref{xp} shows an experimental study on the ATSP and some openings about its symmetric version (TSP). 
Section~\ref{conclusion} concludes the paper with several perspectives.

\section{Background} \label{definitions}
Let us consider a directed graph $G=(V,A)$. A \emph{Strongly Connected Component} (SCC) is a maximal subgraph of $G$ such that for each pair of nodes $\{a,b\} \in V^2$, a path exists from $a$ to $b$ and from $b$ to $a$. A \emph{reduced graph} $G_R=(V_R,A_R)$ of a directed graph $G$ represents the SCC of $G$. This graph is obtained by merging the nodes of $G$ which are in the same SCC and removing any loop. Such a graph is unique and contains no circuit. We link $G$ and $G_R$ with two functions: \sccOf$:  V  \to  V_R$ and \nodesOf $:  V_R  \to  V^V$. The method \sccOf can be represented by one $n$-size integer array. Also, since each node of $V$ belongs to exactly one SCC of $V_R$, the method \nodesOf  can be represented by two integer arrays: the first one represents the canonical element of each SCC while the second one links nodes of the same SCC, behaving like a linked list. Those two arrays have respectively size of $n_R$ and $n$, where $n_R=|V_R|$.
The \emph{transitive closure} of $G$ is a graph $G_{TC}=(V,A_{TC})$ representing node reachability in $G$, i.e. such that $(i,j)\in A_{TC}$ if and only if a path from $i$ to $j$ exists in $G$.

In a CP context a \emph{Graph Variable} can be used to model a graph. Such a concept has been introduced by Le Pape et al. \cite{ROCOCO:GV} and detailed by R\'egin \cite{Regin:GV} and Dooms et al. \cite{Dooms:GV}. We define a graph variable $GV$ by two graphs: the graph of \emph{potential} elements, $G_P=(V_P,A_P)$, contains all the nodes and arcs that potentially occur in at least one solution  whereas the graph of \emph{mandatory} elements, $G_M=(V_M,A_M)$, contains all the nodes and arcs that occur in every solution. It has to be noticed that $G_M\subseteq G_P \subseteq G$. During resolution, decisions and filtering rules will remove nodes/arcs from $G_P$ and add nodes/arcs to $G_M$ until the Graph Variable is instantiated, i.e. when $G_P=G_M$. 
It should be noticed that, regarding the TSP, $V_P=V_M=V$, so resolution will focus on $A_M$ and $A_P$: branching strategies and propagators will remove infeasible arcs from $A_P$ and add mandatory arcs of $A_P$ into $A_M$. 

\newpage
\section{Related Work}\label{sota}
This section describes the state of the art of existing approaches for solving ATSP with CP.
We distinguish the structural filtering, which ensures that a solution is a Hamiltonian path, from cost based pruning, which mainly focus on the solution cost. Then, we study a few representative branching heuristics. 

Given, a directed weighted graph $G=(V,A,f)$, and a function $f: A \to \mathbb{R},$ the ATSP consists in finding a partial subgraph $G'=(V,A',f)$ of $G$ which forms a Hamiltonian circuit of minimum cost. A simple ATSP model in CP, involving a graph variable $GV$, can basically be stated as 
minimizing the sum of costs of arcs in the domain of $GV$ and 
maintaining $GV$ to be a Hamiltonian circuit with a connectivity constraint and a degree constraint (one predecessor and one successor for each node). 
However, it is often more interesting to convert such a model in order to find a path instead of a circuit~\cite{Focacci:MSA,Pesant:ExactTSPTW}.
Our motivation for this transformation is that it brings graph structure that is more likely to be exploited.

In this paper, we consider the ATSP as the problem of finding a minimum cost Hamiltonian path with fixed start and end nodes in a directed weighted graph. 
In the following, $s,e \in V$ respectively denote the start and the end of the expected path. $s$ and $e$ are supposed to be known. They can be obtained by duplicating any arbitrary node, but it makes more sense to duplicate the node representing the salesman's home. 

\subsection{Structural filtering algorithms}\label{details}
Our formulation of the ATSP involves the search of a path instead of a circuit, the \emph{degree constraints} has thus to be stated as follows:
For all $v\in V\backslash \{e\}, \delta^+_{G'}(v)=1$ and for any $v\in V\backslash \{s\}, \delta^-_{G'}(v)=1$, where $\delta^+_{G'}(v)$ (respectively $\delta^-_{G'}(v)$) denotes the number of successors (respectively predecessors) of $v$.
Extremal conditions, being $\delta^+_{G'}(e)=\delta^-_{G'}(s)=0$, are ensured by the initial domain of the graph variable.
An efficient filtering can be obtained with two special purpose incremental propagators. One reacts on mandatory arc detections: whenever arc $(u,v)$ is enforced, other outgoing arc of $u$ and ingoing arcs of $v$ can be pruned. The other reacts on arc removals: whenever a node has only one outgoing (or ingoing) potential arc left, this arc is mandatory and can be enforced. 
A higher level of consistency can be achieved by using a graph-based \allDifferent constraint maintaining a node-successor perfect matching~\cite{Regin:allDifferent}.
\emph{Deleting circuits} is the second important structural aspect of the TSP. Caseau and Laburthe \cite{Caseau:TSP} suggested the simple and efficient \noCycle constraint to remove circuits of the graph. Their fast incremental algorithm is based on the subpaths fusion process. It runs in constant time per arc enforcing event.
The conjunction of this circuit elimination constraint and the above degree constraints is sufficient to guarantee that the solution is a Hamiltonian path from $s$ to $e$.

However, other implied constraints provide additional filtering that may help the resolution process. For instance, Quesada~\cite{Quesada06:graph} suggested the general propagator 
\DomReachability which maintains the transitive closure and the dominance tree of the graph variable. However, its running time, $O(nm)$ in the worst case, makes it unlikely to be profitable in practice. 
A faster constraint, also based on the concept of dominance, is the \arborescence constraint. It is nothing else but a simplification of the \tree constraint \cite{beldi:extTree} recently improved to a $O(n+m)$ worst case time complexity \cite{Fages:Tree}. Given a graph variable $GV$ and a node $s$, such a constraint ensures that  $GV$ is an arborescence rooted in node $s$. 
More precisely, it enforces GAC over the conjunction of the following properties: $G_P$ has no circuit, each node is reachable from $s$ and, each node but $s$ has exactly one predecessor.
Such a filtering can also be used to define the \antiarborescence by switching $s$ with $e$ and predecessors with successors.

A dual approach consists in assigning to each node its position in the path. In such a case, the position of a node is represented by an integer variable with initial domain $[0,n-1]$. Positions are different from a node to another, which can be ensured by an \allDifferent constraint. Since the number of nodes is equal to the number of positions, the bound consistency algorithm of \allDifferent constraint only requires $O(n)$ time.
Plus, a channeling has to be done between the graph variable and position variables. Such a channeling requires $O(n+m)$ worst case time. In particular, lower bounds of positions are adjusted according to a single Breadth First Search (BFS) of $G_P(s)$ while upper bounds of positions are shortened by a BFS of $G_P^{-1}(e)$. It has to be noticed that this approach is related to disjunctive scheduling \cite{Vilim:PHD}: nodes are tasks of duration 1 which are executed on the same machine. The structure of the input graph creates implicit precedence constraints. 

Finally, some greedy procedures based on the finding of cuts have been suggested in the literature: Benchimol et al. enforce some cut-sets of size two \cite{Rousseau:WCC} while Kaya and Hooker use graph separators for pruning \cite{Kaya:TSP}. The drawback of such methods is that they provide no level of consistency.

\subsection{Cost-based filtering algorithms} \label{costDetails}

CP models often embed relaxation based constraints, to provide inference from costs. Fischetti and Toth \cite{Fischetti:MSA} suggested a general bounding procedure for combining different relaxations of the same problem. 

The most natural relaxation is obtained by considering the cheapest outgoing arc of each node: $LB_{trivial}=\sum_{u \in V \backslash \{e\}}\min\{f(u,v) | (u,v) \in A_P\}$. 
Such a lower bound can be computed efficiently but it is in general relatively far from the optimal value.

A stronger relaxation is the weighted version of the \allDifferent constraint, corresponding to the Minimum Assignment Problem (MAP). It requires $O(n(m+n\log{n}))$ time~\cite{Kuhn} to compute a first minimum cost assignment but then $O(n^2)$ time~\cite{Carpaneto:Assignment} to check consistency and filter incrementally. Some interesting evaluations are provided by~\cite{focacci:TSPTW}, but are mainly related to the TSP with time windows constraints. 

A widely exploited subproblem of the ATSP is the Minimum Spanning Tree (MST) problem where the degree constraint and arc direction are relaxed. 
We remark that a hamiltonian path is a spanning tree and that it is possible to compute a MST with a degree restriction at one node~\cite{Gabow:MSA}. 
A MST can be computed in two ways. The first one is Kruskal's algorithm, which runs in $O(\alpha m)$ worst case time, where $\alpha$ is the inverse Ackermann function, but requires edges to be sorted according to their weights. Sorting edges can be done once and for all in $O(m\log{m})$ time. The second option is to use Prim's algorithm which requires $O(m+n\log{n})$ time with Fibonacci heaps~\cite{Gabow:MSA} or $O(m\log{n})$ time if binomial heaps are used instead. Based on Kruskal's algorithm, R\'{e}gin et al.~\cite{Regin:MST,Regin:MSTRevisited} made the \WST constraint which ensures consistency, provides a complete pruning and detects mandatory arcs incrementally, within $O(\alpha m)$ time. Dooms and Katriel~\cite{Dooms:MST,Dooms:NotTooHeavyMST} presented a more complex \MST constraint which maintains a graph and its spanning tree, pruning according to King's algorithm~\cite{King:MST}. 

An improvement of the MST relaxation is the approach of Held and Karp~\cite{HK}, adapted for CP by Benchimol et al.\cite{Rousseau:WCC}. It is the Lagrangian MST relaxation with a policy for updating Langrangian multipliers that provides a fast convergence. 
The idea of this method is to iteratively compute MST that converge towards a path 
by adding penalties on arc costs according to degree constraints violations. It must be noticed that since arc costs change from one iteration to another, Prim's algorithm is better than Kruskal's which requires to sort edges. Moreover, to our knowledge neither algorithm can be applied incrementally.

A more accurate relaxation is the Minimum Spanning Arborescence (MSA) relaxation, since it does not relax the orientation of the graph. This relaxation has been studied by~\cite{Fischetti:MSA,Focacci:MSA} who provide a $O(n^2)$ time filtering algorithm based on primal/dual linear programs. 
The best algorithm for computing a MSA has been provided by Gabow et al. \cite{Gabow:MSA}. Their algorithm runs in $O(m+n\log n)$ worst case time, but it does not provide reduced costs that are used for pruning. Thus, it could be used to create a \MSA constraint with a $O(m+n\log n)$ time consistency checking but the complete filtering algorithm remains in $O(n^2)$ time.
The Lagrangian MSA relaxation, with a MSA computation based on Edmonds' algorithm, has been suggested in~\cite{Caseau:TSP}. This method was very accurate but unfortunately unstable. Also, Benchimol et al.~\cite{Rousseau:WCC} report that the MSA based Held and Karp scheme lead to disappointing results.

\subsection{Branching heuristics} \label{sota_heur}
Branching strategies forms a fundamental aspect of CP which can drastically reduce the search space. We study here dedicated heuristics, 
because the literature is not clear about which branching heuristic should be used. 

Pesant et al. have introduced \emph{Sparse} heuristic~\cite{Pesant:ExactTSPTW} 
which has the singularity of considering occurrences of successors and ignoring costs. In this way, this heuristic is based on the graph structure.
It behaves as following:
First, it selects the set of nodes $X$ with no successor in $G_M$ and the smallest set of successors in $G_P$. Second, it finds the node $x \in X$ which maximize $\sum_{(x,y)\in A_P}|\{(z,y)\in A_P| z\in X\}|$. 
The heuristic then iterates on $x$'s successors. This process is optimized by performing a dichotomic exploration of large domains.

However, very recently, Benchimol et al.~\cite{Rousseau:WCC} suggested a binary heuristic, based on the MST relaxation, that we call~\emph{RemoveMaxRC}. It consists in removing from $G_P$ the tree arc of maximum replacement cost, i.e. the arc which would involve the highest cost augmentation if it was removed. 
By tree arc, we mean the fact that it appears in the MST of the last iteration of the Held and Karp procedure. Acutally, as shown in section \ref{xp}, this branching leads to poor results and should not be used.

Finally, Focacci et al. solve the TSPTW \cite{Focacci:MSA} by guiding the search with time windows, which means that the efficiency of CP for solving the ATSP should not rely entirely on its branching heuristic.

\section{Considering the reduced graph} \label{rg}
In this section, we consider a subproblem which is not a subset of constraints, as usual, but consists in the whole ATSP itself applied to a more restrictive scope: the reduced graph of $G_P$. 
The structure of the reduced graph has already been considered in a similar way for path partitioning problems \cite{Beldi:path-const,Cambazard:RG}. In this section, we first study structural properties that arise from considering the reduced graph. Second, we show how to adapt such information to some state of the art implied models, including cost based reasonings.
\subsection{Structural properties}\label{HPP:RG}

We introduce a propagator, the \RP propagator, which makes the reduced graph a (Hamiltonian) simple path and ensures by the way that each node is reachable from $s$ and can reach $e$. 
It is a monotonic generalization of the algorithm depicted in \cite{Schulte:WeakMonotonicity}.
Necessary conditions for this propagator have already been partially highlighted in \cite{Cambazard:RG}. 
We first modify them in order to fit with the TSP and our model. Next, we provide a linear time incremental algorithm.

\begin{definition}
Reduced path guarantees that any arc in $G_P$ that connects two SCC, is part of a simple path which go through every SCC of $G_P$. 
\end{definition}

\begin{proposition}\label{RGunicity}
Given any directed graph $G$, its reduced graph $G_R$ contains at most one Hamiltonian path.
\end{proposition}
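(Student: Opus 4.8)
The plan is to prove that the reduced graph $G_R$ of any directed graph $G$ admits at most one Hamiltonian path by exploiting the fact, already established in the Background section, that $G_R$ is a directed acyclic graph (it is obtained by merging SCCs and removing loops, and ``contains no circuit''). So the statement reduces to the following purely combinatorial fact: a DAG has at most one Hamiltonian path. I would structure the argument around the topological order that a DAG always admits.

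First I would recall that since $G_R$ is acyclic, it possesses a topological ordering of its vertices $V_R$; moreover, any Hamiltonian path of $G_R$, being a sequence $v_1, v_2, \dots, v_{n_R}$ that visits every vertex exactly once while following arcs $(v_i, v_{i+1}) \in A_R$, must itself be consistent with a topological order (an arc can only go from an earlier vertex to a later one). The key step is then to argue uniqueness of the ordering that a Hamiltonian path imposes. Suppose for contradiction that two distinct Hamiltonian paths $P$ and $P'$ exist. They induce two distinct total orders on $V_R$, so there exist two vertices $a, b$ appearing in the order $a$ before $b$ in $P$ but $b$ before $a$ in $P'$.

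The heart of the argument is to derive a contradiction from such a pair. Because $P$ is Hamiltonian, following $P$ from $a$ to $b$ gives a directed path in $G_R$ from $a$ to $b$; symmetrically, following $P'$ from $b$ to $a$ gives a directed path in $G_R$ from $b$ to $a$. The concatenation of these two directed paths is a closed directed walk through $a$ and $b$ with $a \neq b$, which contains a directed circuit. This contradicts the fact that $G_R$ is acyclic. Hence no such inverted pair $a,b$ can exist, the two orders coincide, and $P = P'$.

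The main obstacle I anticipate is making the final contradiction fully rigorous: I must be careful that the two sub-paths extracted from $P$ and $P'$ genuinely form a closed walk visiting $a$ and $b$ in a way that forces a nontrivial circuit, rather than a degenerate walk. Concretely, I would note that the sub-path of $P$ from $a$ to $b$ uses at least one arc (since $a\neq b$), and likewise the sub-path of $P'$ from $b$ to $a$; their union is a nonempty closed directed walk, and every nonempty closed directed walk in a finite digraph contains a directed circuit, contradicting acyclicity of $G_R$. A minor cleanliness point worth stating is that the merging/loop-removal construction guaranteeing $G_R$ is circuit-free is exactly what licenses this acyclicity assumption, so I would cite that property from Section~\ref{definitions} rather than re-proving it.
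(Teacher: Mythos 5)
Your argument is correct and follows essentially the same route as the paper's own proof: both identify an inverted pair of vertices in the two putative Hamiltonian paths, extract a directed path in each direction between them, and derive a circuit contradicting the acyclicity of $G_R$. Your added care about the closed walk being nonempty and containing a genuine circuit is a reasonable tightening of the same idea, not a different approach.
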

\begin {proof}
Let us consider a graph $G$ such that $G_R$ contains at least two Hamiltonian paths $p1$ and $p2$, $p1 \neq p2$. 
Since both $p1$ and $p2$ are Hamiltonian then there exists at least two nodes $\{x,y\} \subset V_R$, $x\neq y$, such that $x$ is visited before $y$ in $p1$ and after $y$ in $p2$. Thus, the graph $P=p1\bigcup p2$ contains a path from $x$ to $y$ and from $y$ to $x$. This is a circuit. As  $P\subset G_R$, $G_R$ also contains a circuit, which is a contradiction.
\qed \end{proof}

We note $G_R$ the reduced graph of $G_P$. We remark that, as $s$ has no predecessor then its SCC is the node $s$ itself. Also, as $e$ has no successor then \sccOfx{e}$=\{e\}$. To distinguish nodes of $V$ from nodes of the reduced graph, we note \sccOfx{s}$=s_R$ and \sccOfx{e}$=e_R$. It follows that any Hamiltonian path in $G_R$ will be from $s_R$ to $e_R$.

\begin{proposition}\label{RGcond}
If there exists a Hamiltonian path from $s$ to $e$ in $G_P$ then there exists a Hamiltonian path in $G_R$.
\end{proposition}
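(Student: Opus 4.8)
The plan is to take a given Hamiltonian path in $G_P$ and project it onto the reduced graph through the \sccOf mapping, then argue that the projected walk is in fact a Hamiltonian path of $G_R$. First I would fix a Hamiltonian path $H = v_0 v_1 \cdots v_{n-1}$ from $s$ to $e$ in $G_P$ and form the sequence of components $C_i = $\sccOfx{v_i}. For every arc $(v_i,v_{i+1})$ of $H$, either both endpoints lie in the same SCC, so that $C_i = C_{i+1}$, or they lie in different components, in which case $(C_i,C_{i+1})$ is by construction an arc of $G_R$. Collapsing maximal blocks of identical consecutive components then yields a sequence $D_0 D_1 \cdots D_k$ in which consecutive terms are distinct and every transition is an arc of $G_R$; that is, a directed walk in $G_R$ starting at $D_0 = C_0 = s_R$ and ending at $D_k = C_{n-1} = e_R$.

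Next I would show that this walk is actually a simple path. This is where the acyclicity of the reduced graph recalled in Section~\ref{definitions} is the decisive ingredient. Indeed, if some component were repeated, say $D_j = D_{j'}$ with $j < j'$, then $D_j D_{j+1} \cdots D_{j'}$ would be a closed directed walk visiting at least two distinct components (since consecutive $D$'s differ), hence would contain a directed circuit of $G_R$, contradicting the fact that the reduced graph contains no circuit. Therefore all the $D_j$ are pairwise distinct and the walk is a simple path from $s_R$ to $e_R$.

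It then remains to check that this path is Hamiltonian, i.e. that it visits every node of $V_R$. Since $H$ visits every node of $G_P$ and each such node belongs to exactly one SCC, every component of $G_R$ occurs among the $C_i$, and collapsing consecutive repetitions preserves the set of components visited. Hence $\{D_0,\dots,D_k\} = V_R$, the simple path spans all of $G_R$, and it is precisely a Hamiltonian path.

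\textbf{Main obstacle.} The crux is the simple-path step: one must rule out the path re-entering a component it has already left. Stated contrapositively, it amounts to proving that the nodes of any single SCC are necessarily traversed consecutively by $H$, since any re-entry would close a circuit through two distinct nodes of the acyclic reduced graph. Once this is secured, the covering property and the endpoint conditions $D_0 = s_R$, $D_k = e_R$ follow immediately from the definition of \sccOf.
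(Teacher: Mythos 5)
Your proof is correct, but it runs in the opposite logical direction from the paper's and is rather more explicit about the key step. The paper argues by contraposition: assuming $G_R$ has no Hamiltonian path from $s_R$ to $e_R$, it observes that every $s_R$--$e_R$ path in $G_R$ misses some node $x\in V_R$, hence every $s$--$e$ path in $G_P$ misses the whole SCC \nodesOfx{x} and so cannot be Hamiltonian. That argument silently uses exactly the fact you prove in detail, namely that projecting an $s$--$e$ path of $G_P$ through \sccOf and collapsing consecutive repetitions yields a \emph{simple} path of $G_R$ (a walk in the acyclic reduced graph cannot revisit a component without creating a circuit). Your direct version makes this projection, the collapse of maximal blocks, and the appeal to acyclicity explicit, and additionally records that the resulting path runs from $s_R$ to $e_R$ and covers all of $V_R$ because the original path covers $V$. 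What you gain is a constructive and fully rigorous derivation of the Hamiltonian path in $G_R$; what the paper's phrasing gains is brevity, at the cost of leaving the ``projection of a path is a path'' step implicit. Both proofs ultimately rest on the same ingredient, the circuit-freeness of the reduced graph, so there is no gap in your argument.
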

\begin {proof}
Lets suppose that $G_R$ has no Hamiltonian path from $s_R$ to $e_R$. Then for any path $p_R$ in $G_R$ starting at $s_R$ and ending at $e_R$, there exist at least one node $x \in V_R$, which is not visited by $p_R$. Thus, for any path $p_E$ in $G_P$ starting at $s$ and ending at $e$, there exist at least one SCC $x \in V_R$ which is not traversed by $p_E$, so $\forall u \in$ \nodesOfx{x}, then $u \notin p_E$. Thus any path in $G_P$ starting at $s$ and ending at $e$ is not Hamiltonian.
\qed \end{proof}

It follows that any transitive arc of $G_R$ must be pruned and that remaining arcs of $G_R$ are mandatory (otherwise the graph becomes disconnected): any SCC, but $e_R$, must have exactly one outgoing arc. 
An example is given in figure~\ref{RP_fig}: the graph $G_P$ contains four SCC. Its reduced graph, $G_R$, has a unique Hamiltonian path $P_R=(\{A\},\{B,C\},\{E,D,F\},\{G\})$. Arcs of $G_R\backslash P_R$ are infeasible so $(A,E)$ and $(C,G)$ must be pruned from $G_P$.

\begin{figure}
	\centering
	\subfigure[\scriptsize $G_P$]{
		\includegraphics[width=3cm]{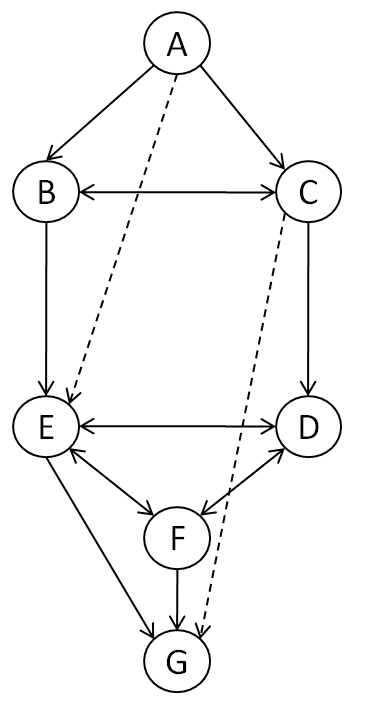}
	}
	\hspace{0.1cm}	
	\subfigure[\scriptsize $G_R$ before filtering]{
		\includegraphics[width=3cm]{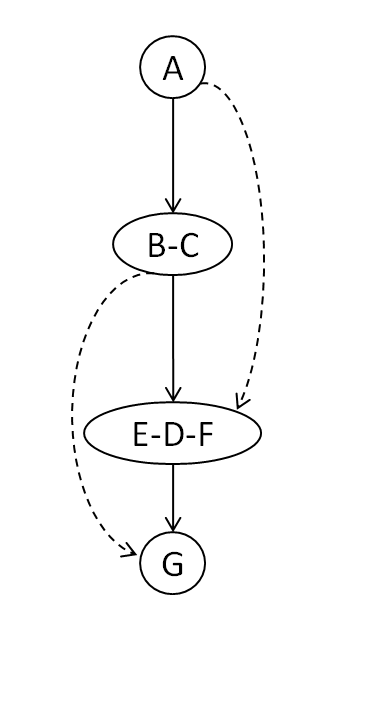}
	}
	\hspace{0.1cm}
	\subfigure[\scriptsize $G_R$ after filtering]{
		\includegraphics[width=3cm]{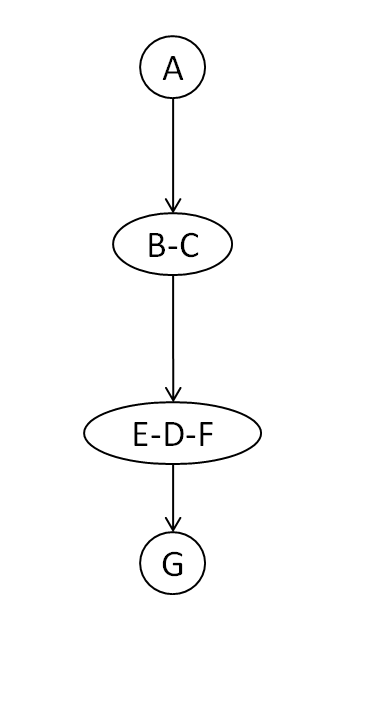}
	}
	\caption{\label{RP_fig} \RP filtering, transitive arcs (dotted) are infeasible.}
\end{figure}

We introduce a new data structure in $G_R$ that we call \outArcs: for each node $x\in V_R$, \outArcsx{x} is the list of arcs $\{(u,v) \in A_P|$ \sccOfx{u}$=x$ and \sccOfx{v}$\neq x\}$. We can now easily draw a complete filtering algorithm for the \RP propagator which ensures the GAC over the property that $G_R$ must be a path in $O(n+m)$ time:
 \begin{enumerate}
	\item Data structures: Compute the SCC of $G_P$ (with Tarjan's algorithm \cite{Tarjan:DFS}) and build the reduced graph $G_R=(V_R,A_R)$.\label{data}
	\item Taking mandatory arcs into account: $\forall (u,v) \in A_M$ such that $x=$\sccOfx{u} and $x\neq$\sccOfx{v},
	$\forall (p,q)\in$ \outArcsx{x} $\backslash \{(u,v)\}$ remove arc $(p,q)$. \label{mand}
	\item Consistency checking: Make $G_R$ a path if possible, fail otherwise. \label{const}
	\item for each arc $(u,v)\in A_P$ such that $x=$\sccOfx{u} and $y=$\sccOfx{v}, $x\neq y$,
	\begin{enumerate}
		\item Pruning: if $(x,y) \notin A_R$, remove arc $(u,v)$. \label{pruning}
		\item Enforcing: if $(x,y) \in A_R$ and $(u,v)$ is the only arc of $A_P$ that links $x$ and $y$, enforce arc $(u,v)$. \label{enforcing}
	\end{enumerate}
\end{enumerate}
A procedure performing step \ref{const} starts on node $s_R$, finds its right successor $next$ (the one which has only one predecessor) and removes other outgoing arcs. Then, the same procedure is applied on $next$ and so on, until $e_R$ is reached. 
Such an algorithm must be performed once during the initial propagation. Then, the propagator reacts to fine events. 
To have an incremental behavior, the propagator must maintain the set of SCC and the reduced graph. 
Haeupler et al. \cite{Haeupler:SCCRG} worked on maintaining SCC and a topological ordering of nodes in the reduced graph, but under the addition of arcs. We deal with arc deletions. Moreover, we may have lots of arc deletions per propagation (at least for the first ones), thus we should not use a completely on-line algorithm. 

\emph{SCC maintenance:} 
Let us consider an arc $(u,v) \in A_P$ such that \sccOfx{u}$=x$ and \sccOfx{v}$=y$.
If $x \neq y$ and if $(u,v)$ is removed from $A_P$ then it  must be removed from \outArcsx{x} also. 
If $x=y$ then the removal of $(u,v)$ may split the SCC $x$, so computation is required. 
As many arcs can be removed from one propagation to another, we suggest a partially incremental algorithm
which computes exclusively SCC that contains at least one removed arc, each one exactly once. 
We introduce a bit set to mark nodes of $G_R$. Initially, each node is unmarked, then when a removed arc, inside an unmarked SCC $x$, is considered, we apply Tarjan's algorithm on $G_P \bigcap$ \nodesOfx{x} and mark $x$. Tarjan's algorithm will return either $x$ if $x$ is still strongly connected, or the new set of SCC induced by all arc removals from $x$. In both cases, we can ignore other arcs that have been removed from \nodesOfx{x}. Since the SCC of a graph are node-disjoint, the overall processing time of a propagation dealing with $k$ arc deletions involving some SCC $K \subset V_R$ is $\sum_{x\in K} O(n_x+m_x)=O(n+m)$.

\emph{$G_R$ maintenance and filtering:}
Algorithm~\ref{algoRGHPfinderIncr} shows how to get an incremental propagator that reacts to SCC splits.
When a SCC $x$ is split into $k$ new SCC, the reduced graph gets $k-1$ new nodes and must be turned into a path while some filtering may be performed on $GV$. 
The good thing is that there is no need to consider the entire graph. We note $X\subset V_R$ the set of nodes induced by the breaking of SCC $x$. Since $G_R$ was a path at the end of the last propagation, we call $p$ the predecessor of $x$ in $G_R$ and $s$ its successor. Then, we only need to consider nodes of $X \bigcup \{p,s\}$ in $G_R$. 
To compute new arcs in $G_R$ it is necessary to examine arcs of $G_P$, but only \outArcsx{p} and arcs that have the tail in a SCC of $X$ need to be considered. 
Note that we filter during the maintenance process.

\begin{algorithm}[h]
\begin{scriptsize}
\caption{Incremental \RP Propagator \label{algoRGHPfinderIncr}}
Let $\mathtt{x}$ be the old SCC split into a set $\mathtt{X}$ of new SCC
\begin{algorithmic}
	\STATE $\mathtt{p \gets G_R.predecessors(x).first()}$ \COMMENT{get the first (and unique) predecessor of $\mathtt{x}$ in $G_R$}	
	\STATE $\mathtt{s \gets G_R.successors(x).first()}$
	\IF {$\mathtt{(VISIT(p,s) \neq |X|+2)}$}
		\STATE $\mathtt{FAIL}$
	\ENDIF 
\end{algorithmic}
$----------------------------- -- -- -- -- -- -- --$\\
$\mathtt{int \ VISIT(int \ current,\ int \ last)}$
\begin{algorithmic}
	\IF {$\mathtt{(current = last)}$} 
		\RETURN $\mathtt{1}$  
	\ENDIF
	\STATE $\mathtt{next \gets -1}$
	\FOR {$\mathtt{(node \ x \in current.successors)}$}
		\IF {$\mathtt{(|x.predecessors|=1)}$} 
			\IF{$\mathtt{(next \neq -1)}$}
				\RETURN $\mathtt{0}$ \COMMENT{$next$ and $x$ are incomparable which is a contradiction}
			\ENDIF
			\STATE $\mathtt{next \gets x}$
		\ELSE
			\STATE $\mathtt{G_R.removeArc(current,x)}$
		\ENDIF
	\ENDFOR
	\FOR {$\mathtt{(arc \ (u,v) \in outArcs(current))}$}
		\IF{$\mathtt{(sccOf(v)\neq next)}$}
			\STATE $\mathtt{G_P.removeArc(u,v)}$ \COMMENT{Prune infeasible arcs}
			\STATE $\mathtt{outArcs.remove(u,v)}$
		\ENDIF
	\ENDFOR
	\IF{$\mathtt{(|outArcs(current)|=1)}$}
		\STATE $\mathtt{G_M.addArc(outArcs(current).getFirst())}$ \COMMENT{Enforce mandatory arcs}
	\ENDIF
	\RETURN $\mathtt{1+VISIT(next,last)}$
\end{algorithmic}
\end{scriptsize}
\end{algorithm}

Once we get those data structures, then it is worth exploiting them the most we can, to make such a computation profitable. Especially, a few trivial ad hoc rules come when considering SCC. We call an \emph{indoor} a node with a predecessor outside its own SCC, an \emph{outdoor}, a node with a successor outside its SCC and a \emph{door} a node which is an indoor or/and an outdoor.

\begin{proposition}\label{indoor}
If a SCC $X$ has only one indoor $i \in X$, then any arc $(j,i) \in X$ is infeasible.
\end{proposition}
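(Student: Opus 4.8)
The plan is to argue by contradiction, relying on the acyclicity of the reduced graph together with the degree constraint of our path model. First I would observe that the statement is non-trivial only when $X$ contains an internal arc $(j,i)$ with $j \neq i$, so $|X| \geq 2$; in particular $X \neq s_R$ and $X \neq e_R$, since those two SCC are the singletons $\{s\}$ and $\{e\}$. Consequently, every Hamiltonian path from $s$ to $e$ must both enter and leave $X$ through arcs connecting $X$ to other SCC.

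Next I would invoke the basic fact recalled in Section~\ref{definitions}, namely that a reduced graph contains no circuit, so $G_R$ is a DAG. This means that in any Hamiltonian path from $s$ to $e$, the nodes of $X$ are visited as a single contiguous segment: once the path leaves $X$ to reach another SCC $Y$, it can never return to $X$, for that would create a circuit in $G_R$. Hence there is a well-defined first node $f \in X$ visited by the path. Since $X \neq s_R$, the node preceding $f$ along the path lies in an earlier SCC, i.e. outside $X$; by definition $f$ therefore has a predecessor outside its own SCC and is thus an \emph{indoor}.

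The key step is then immediate: because $X$ has a unique indoor $i$, we must have $f = i$, so the incoming arc of $i$ along the path comes from outside $X$. By the degree constraint, each node but $s$ has exactly one predecessor in the solution, and since $i \neq s$ the incoming arc of $i$ is precisely this external entry arc. Any internal arc $(j,i)$ with $j \in X$ would provide a second, competing predecessor for $i$, which is impossible. Thus no Hamiltonian path from $s$ to $e$ can use $(j,i)$, i.e. $(j,i)$ is infeasible.

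The main obstacle is the contiguity argument of the second paragraph: everything hinges on the reduced graph being a DAG, which is what guarantees that a Hamiltonian path enters each SCC exactly once and hence admits a single entry node that must be an indoor. Once that observation is in place, the uniqueness of the indoor forces the entry at $i$, and the degree constraint finishes the proof.
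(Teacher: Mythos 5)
Your proof is correct and rests on the same two ingredients as the paper's: the degree constraint (every node but $s$ has exactly one predecessor in a solution) and the fact that the unique indoor $i$ is the only possible entry point into $X$. The paper phrases this as a contradiction --- enforcing $(j,i)$ forces all external predecessors of $i$ to be pruned, leaving $X$ unreachable from $s_R$ --- while you argue forwards that any Hamiltonian path must enter $X$ at $i$ through an external arc (using the contiguity of the visit to $X$); these are the same argument read in opposite directions.
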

\begin{proof} First we remark that $i$ cannot be $s$ since $s$ has no predecessors. Let us then suppose that such arc $(j,i)\in X$ is enforced. As the TSP requires nodes of $V\backslash \{s\}$ to have exactly one predecessors, all other predecessors of $i$ will be pruned. As $i$ was the only indoor of $X$, then $X$ is not reachable anymore from $s_R$, which is by Proposition \ref{RGcond} a contradiction. \qed \end{proof}

By symmetry, if a SCC $X$ has only one outdoor $i \in X$, then any arc $(i,j) \in X$ is infeasible.
Moreover, if a SCC $X$ of more than two nodes has only two doors $i,j \in X$, then arcs $(i,j)$ and $(j,i)$ are infeasible.

\subsection{Strengthening other models}\label{TC}
In general, the reduced graph provides three kinds of information: Precedences between nodes of distinct SCC; Reachability between nodes of the graph; Cardinality sets $\forall x\in V_R \backslash \{e_R\}, |$\outArcsx{x}$|=1$.
Such information can be directly used to generate lazy clauses \cite{Stuckey:Clause}.
It can also improve the quality of the channeling between the graph variable and position variables by considering precedences: When adjusting bounds of position variables (or time windows), the BFS must be tuned accordingly, processing SCC one after the other.

Some propagators such as  \DomReachability~\cite{Quesada06:graph}, require the transitive closure of the graph. Its computation requires $O(nm)$ worst case time in general, but since the reduced graph is now a path, we can sketch a trivial and optimal algorithm:
For any node $v\in V$, we call $S_v \subset V\backslash \{v\}$ the set of nodes reachable from $v$ in $G_P$
and $D_v \subset V_R$ the set of nodes reachable from \sccOfx{v}$ \in V_R$ in $G_R$, including \sccOfx{v}.
More formally, $S_v=\{u\in V |v \rightarrow u\}$ and $D_v=x\cup \{y\in V_R |x\rightarrow y\}$, where $x=$\sccOfx{v}.
Then, for any node $v \in V$, $S_v= \{ $\nodesOfx{y} $| y \in D_v\} \backslash \{v\}$. 
As $G_R$ is a path, iterating on $D_v$ requires $O(|D_v|)$ operations. 
Also, since SCC are node-disjoints, computing $S_v$ takes $O(|D_v|+\sum_{y\in D_v}|$\nodesOfx{y}$|)=O(|S_v|)$ because $|D_v|\leq|S_v|+1$ and $|\{ $\nodesOfx{y} $| y \in D_v\}| = |S_v| +1$. As $|S_v|\leq n$, the computation of the transitive closure takes $O(\sum_{v\in V}|S_v|)$ which is bounded by $O(n^2)$. It can be performed incrementally by considering SCC splits only.

Finally we show how the MST relaxation of the TSP can be improved by considering the reduced graph.  
We call a Bounding Spanning Tree (BST) of $G_P$ a spanning tree of $G_P$ obtained by finding a minimum spanning tree in every SCC of $G_P$ independently and then linking them together using the cheapest arcs:

\begin{centering}
	\begin{tabular}{ll}
		$BST(G_P)=$ & $\bigcup_{x \in G_R}MST(G_P \bigcap$ \nodesOfx{x}$)$ \\
				        & $\bigcup_{a\in V_R}min_f\{(u,v)$ $| (u,v)\in$ \outArcsx{a}$\}$. \\
		\quad
	\end{tabular}\\
\end{centering} 

The resulting spanning tree provides a tighter bound than a MST. Indeed, since BST and MST both are spanning trees, $f(BST(G_P)) \geq f(MST(G_P))$, otherwise MST is not minimal. 

We will now see how to improve the \WST (WST) constraint, leading to the \BST (BST) propagator. We assume that the reader is already familiar with this constraint, otherwise papers \cite{Regin:MST,Regin:MSTRevisited} should be considered as references. The BST can replace the MST of the WST constraint: the pruning rules of WST constraint will provide more inference since the bound it tighter. Actually, we can do even better by slightly modifying the pruning rule of the WST constraint for arcs that are between two SCC: an arc linking two SCC can only replace (or be replaced by) another arc linking those two same SCC. Consider a BST of cost $B$, the upper bound of the objective variable $UB$, an arc $(x,y) \in A_R$ and a tree arc $(u,v) \in$ \outArcsx{x}, we can rephrase the pruning rule by:
Any arc $(u_2,v_2) \in$ \outArcsx{x} is infeasible if $B - f(u,v) + f(u_2,v_2)>UB$.
The reader should notice that no Lowest Common Ancestor (LCA) query is performed here. This do not only accelerate the algorithm, it also enables more pruning, because a LCA query could have returned an arc that does not link SCC $x$ and $y$. Such an arc cannot replace $(u,v)$ since exactly one arc of \outArcsx{x} is mandatory. 

We now briefly describe a simple and efficient way to compute the BST. We assume that the \RP propagator has been applied and that $G_R$ is thus a path. 
Initially the BST is empty. First, we add to the BST mandatory arcs of $G_P$, then for each $x\in V_R$ we add $min_f((u,v)\in$ \outArcsx{x}$)$. Finally, we run Kruskal's algorithm as described in \cite{Regin:MST,Regin:MSTRevisited} until the BST has $n-1$ arcs. A faster way to compute a BST is to perform Prim's algorithm on successive SCC, but this method does not enable to use the efficient filtering algorithm of R\'{e}gin \cite{Regin:MST}.

Figure \ref{BST_fig} illustrates this relaxation : the input directed graph, on figure \ref{ig}, is composed of four SCC $\{A\},\{B,C\},\{E,D,F\}$ and $\{G\}$. For simplicity purpose, costs are symmetric.
Its minimum hamiltonian path, figure \ref{opt}, costs 28 and we will suppose that such a value is the current upper bound of the objective variable.
The MST of the graph, figure \ref{mst}, only costs 19, which is unfortunately too low to filter any arc. Instead, the BST, figure \ref{bst}, is much more accurate. It actually consists of the MST of each SCC, $\{\emptyset , \{(BC)\}, \{(D,F),(E,F)\}, \emptyset \}$ with respective costs $\{0,10,10,0\}$, and the cheapest arcs that connect SCC each others: $\{(A,B),(C,D),(F,G)\}$ with respective costs $\{2,3,2\}$. Thus, the entire BST costs 27. It is worth noticing that it enables to filter arcs $(B,E)$ and $(E,G)$. Indeed, $(B,E)$ can only replace $(C,D)$ in the relaxation, so its marginal cost is $f(BST)+f(B,E)-f(C,D)=27+5-3=29$ which is strictly greater than the upper bound of the objective. The same reasoning enables to prune $(E,G)$.

\begin{figure}[H]
	\centering
	\subfigure[\label{ig} \tiny Input graph]{
		\includegraphics[width=2.75cm]{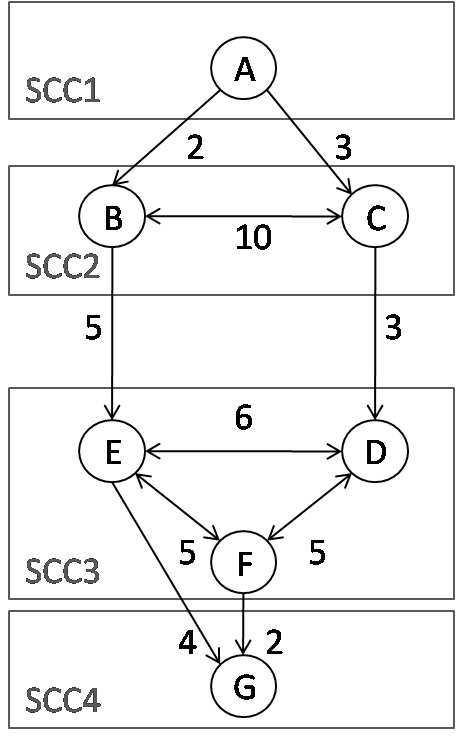}
	}
	\hspace{0.25cm}
	\subfigure[\label{opt} \tiny Optimum$ = 28$]{
		\includegraphics[width=2cm]{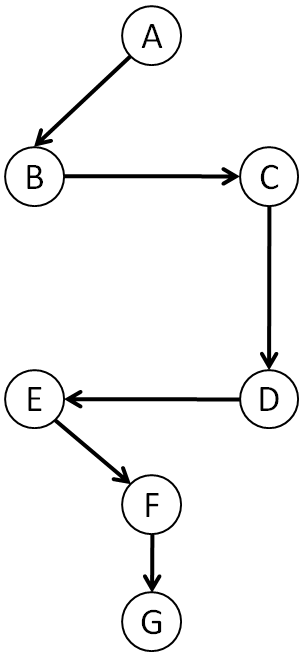}
	}
	\hspace{0.25cm}
	\subfigure[\label{mst} \tiny MST bound$ = 19$]{
		\includegraphics[width=2cm]{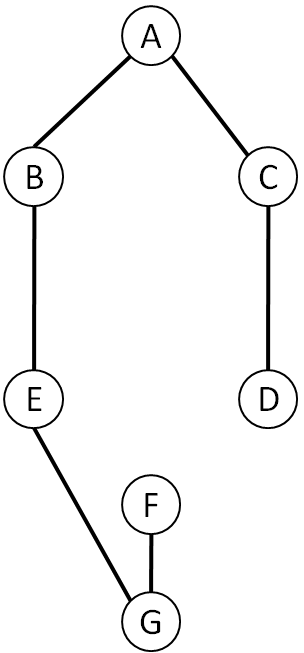}
	}
	\hspace{0.25cm}
	\subfigure[\label{bst} \tiny BST bound$ = 27$]{
		\includegraphics[width=2cm]{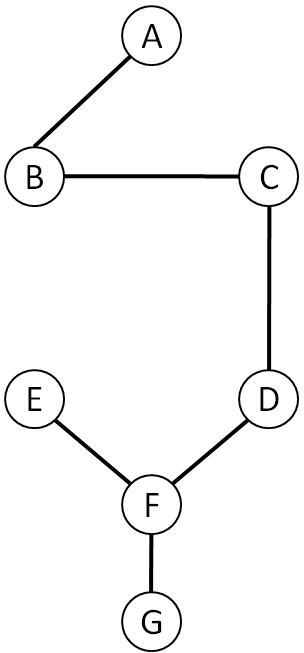}
	}
	\caption{A new tree relaxation, more accurate than the MST. \label{BST_fig}}
\end{figure}

\section{The Held and Karp method} \label{hk}

The Lagrangian relaxation of Held and Karp has initially been defined for solving symmetric TSP. Instead of converting asymmetric instances into symmetric ones, through the transformation of Jonker and Volgenant~\cite{Rousseau:WCC}, we can directly adapt it to the asymmetric case: at each iteration $k$, we define two penalties per node $v \in V$, $\pi^{(k)}_{in}(v)$ and $\pi^{(k)}_{out}(v)$, respectively equal to $(\delta^{-(k)}(v)-1)*C^{(k)}$ and $(\delta^{+(k)}(v)-1)*C^{(k)}$. 
We note $\delta^{-(k)}(v)$ the in-degree of $v$ in the MST of iteration $k$ whereas $\delta^{+(k)}(v)$ is its out-degree and $C^{(k)}$ is a constant whose calculation is discussed in \cite{HK,Helsgaun:LKH}.
As a path is expected, we post $\pi^{(k)}_{in}(s)=\pi^{(k)}_{out}(e)=0$. Arc costs are then changed according to: $f^{(k+1)}(x,y)=f^{(k)}(x,y)+\pi_{out}^{(k)}(x)+\pi_{in}^{(k)}(y)$. 
It has to be noticed that, since it relies on the computation of successive MST, such a model is equivalent to what would give the usual Held and Karp scheme used on a transformed instance. However, this framework is more general and can also handle the computation of Minimum Spanning Arborescence.

Such a method should be implemented within a specific propagator to be easily plugged, or unplugged, into a constraint.
We noticed that keeping track of Lagrangian multipliers from one propagation to another, even upon backtracking, saves lots of computations and provides better results on average. 
Our approach is based on a few runs. 
A run is composed of $K$ iterations in which we compute a MST according to Prim's algorithm and update $C^{(k)}$ and the cost matrix. 
Then, we run a Kruskal's based MST to apply the complete filtering of \cite{Rousseau:WCC,Regin:MST}. We first chose $K=O(n)$ but this led to disappointing results when scaling up to a hundred nodes. We thus decided to fix $K$ to a constant. The value $K=30$ appeared as a good compromise between speed and accuracy. Remark that, as we perform a fix point, the method may be called several times per search node, and since it is relatively slow, we always schedule this propagator at the end of the propagation queue.

This procedure has the inconvenient of not being monotonic\footnotemark[1] (it is not even idempotent): filtering, related to other propagators, can slow down the convergence of the method. 
\footnotetext[1]{A propagator $P$, involving a graph variable $GV$ and a filtering function $f: GV \mapsto GV$ is said to be \emph{monotonic} \cite{Schulte:WeakMonotonicity} iff for any $GV' \subseteq GV, f(GV') \subseteq f(GV)$, where $GV' \subseteq GV \Leftrightarrow G'_P \subseteq G_P \wedge G_M \subseteq G'_M$.}
The intuition is that to go from a MST to an optimal tour, it may be easier to use some infeasible arcs during the convergence process. One can see an analogy with local search techniques that explore infeasible solutions in order to reach the best (feasible) ones more quickly \cite{Laporte:DARP}. This fact, which occured during some of our experiments involving static branching heuristics, breaks the usual saying \emph{the more filtering, the better}. Moreover, it follows that we cannot measure precisely the improvement stemming from additional structural filtering.
We mention that the BST relaxation can be used within the Held and Karp scheme, however, this may also affect the convergence of the method and thus sometimes yield to poorer results. For that reason, we recommend to use a Lagrangian BST relaxation in addition to, rather than in replacement of, the usual Held and Karp procedure. 

\section{Experimental study}\label{xp}

This section presents some experiments we made in order to measure the impact of the graph structure. 
We will show that branching according to graph structure only outperforms current state of the art results while using 
 implied filtering based on graph structure avoids pathological behaviors on hard instances at a negligible time consumption.
Our implementations have been done within the \choco solver which is an open source Java library. Tests have been performed on a Macbook pro under OS X 10.7.2 and with a 2.7 GHz Intel core i7 and 8Go of DDR3. We set a limit of 3 Go to be allocated to the JVM. We tested TSP and ATSP instance of the TSPLIB. For each one, we refer to the number of search nodes by $|$nodes$|$ and report time measurements in seconds. 
As in \cite{Rousseau:WCC}, we study optimality proof and thus provide the optimal value as an upper bound of the problem. We computed equivalent state of the art results (SOTA) (referred as \emph{1-tree with filtering} in \cite{Rousseau:WCC}), to position our model in general.
Their implementation is in C++ and has no memory restriction.

Our implementation (referred as BASIC) involves one graph variable, one integer variable (the objective) and one single constraint that is composed of several propagators.
Subtour elimination is performed by a special purpose incremental propagator, inspired from the \noCycle constraint \cite{Caseau:TSP}. The degree constraint is ensured by special purpose incremental propagators described in section \ref{details}. The objective is adjusted by the natural relaxation and an implied propagator, based on the Held and Karp method. 
We mention that we solved \texttt{rbg} instances (that are highly asymmetric) by replacing the tree based relaxation by a Minimum Assignment Problem relaxation (also in SOTA). For that, we have implemented a simple Hungarian algorithm. Indeed, it always provided the optimal value as a lower bound at the root node. 
When a relaxation finds an optimal solution, this one can be directly enforced~\cite{Rousseau:WCC}. However, it could be in contradiction with side constraints. Thus, we unplugged such a greedy mode.
The solver works under a trailing environment. 

\subsection{Dedicated heuristics}\label{heur}
We experimentally compare the branching heuristics \emph{RemoveMaxRC} and \emph{Sparse} of section \ref{sota_heur}. 
We also introduce three variants of these methods:\\
- \emph{EnforceMaxRC}, consists in enforcing the tree arc of maximum replacement cost. It is the opposite of \emph{RemoveMaxRC}.\\
- \emph{RemoveMaxMC}, consists in removing the non tree arc of maximum marginal cost, i.e. the arc which would involve the highest cost augmentation if it was enforced. This heuristic may require an important number of decisions to solve the problem. There are low probabilities to make wrong decisions, but if a mistake has been performed early in the search tree, it might be disastrous for the resolution. \\
- \emph{EnforceSparse}, which first selects the set of nodes $X$ with no successor in $G_M$ and the smallest set of successors in $G_P$. Second, it finds the node $x \in X$ which maximize $\sum_{(x,y)\in A_P}|\{(z,y)\in A_P| z\in X\}|$. Then it fixes the successor of $x$ by enforcing the arc $(x,y)\in A_P$ such that $|\{(z,y)\in A_P | z\in X\}|$ is maximal. 

All branching heuristics are performed in a binary tree search.
\emph{RemoveMaxRC}, \emph{RemoveMaxMC} and \emph{Sparse} can be said to be reduction heuristics. They respectively involve a worst case depth for the search tree of $O(n^2)$, $O(n^2)$ and $O(n\log{n})$.
In contrast, \emph{EnforceMaxRC} and \emph{EnforceSparse} perform assignments, leading to a $O(n)$ depth in the worst case. 
Assignment heuristics perform strong decisions that bring more structure in left branches of the search tree while it is the opposite for reduction branchings that restrict more right branches.
An exception is \emph{Sparse} which has a balanced impact on both branches.

\subsection{Structural filtering}
We then study the benefit we could get from adding some implied filtering algorithms to the BASIC model
with \emph{RemoveMaxMC} and \emph{Sparse} heuristics:\\
- ARB: \arborescence and \antiarborescence propagators used together.\\
- POS: The model based on nodes position, with an \allDifferent constraint performing bound consistency.\\
- AD: \allDifferent propagator, adapted to graph variables, with GAC.\\
- BST: \RP propagator with a Lagrangian relaxation based on a BST, in addition to the usual Held-Karp scheme.\\
- ALL: combine all above propagators.

\begin{table}[!t]
\begin{tiny}
\begin{centering}
\begin{tabular}{| l | rr | rr | rr | rr | rr | rr |}
\hline
		&	\multicolumn{8}{c|}{Cost based heuristics}	&	\multicolumn{4}{c|}{Graph structure based heuristics}	\\
		&	\multicolumn{2}{c|}{SOTA\cite{Rousseau:WCC}}	&	\multicolumn{2}{c|}{RemoveMaxRC}		&	\multicolumn{2}{c|}{EnforceMaxRC}		&	\multicolumn{2}{c|}{RemoveMaxMC}	&	\multicolumn{2}{c|}{Sparse}	&	\multicolumn{2}{c|}{EnforceSparse}\\
instance	&	$|$nodes$|$	&	time	&	$|$nodes$|$	&	time	&	$|$nodes$|$	&	time	&	$|$nodes$|$	&	time	&	$|$nodes$|$	&	time	&	$|$nodes$|$	&	time	\\
\hline
br17	&	223,603	&	26.64	&	513	&	1.17	&	14	&	\bf0.02	&	120	&	0.55	&	12	&	\bf0.02	&	\bf11	&	0.13	\\
ft53	&	\bf1	&	\bf0.07	&	7	&	0.09	&	7	&	0.12	&	7	&	0.13	&	3	&	0.08	&	3	&	0.10	\\
ft70	&	138	&	0.81	&	31	&	0.19	&	33	&	0.15	&	52	&	0.22	&	\bf16	&	\bf0.12	&	\bf16	&	0.13	\\
ry48p	&	364	&	1.01	&	143	&	0.38	&	53	&	\bf0.19	&	1,135	&	2.25	&	71	&	0.28	&	\bf50	&	0.20	\\
ftv33	&	3	&	0.02	&	2	&	0.01	&	2	&	0.01	&	2	&	0.01	&	2	&	0.01	&	2	&	0.01	\\
ftv35	&	41	&	0.08	&	\bf12	&	\bf0.03	&	29	&	0.05	&	120	&	0.19	&	25	&	0.08	&	27	&	0.05	\\
ftv38	&	87	&	0.15	&	42	&	0.08	&	26	&	0.06	&	201	&	0.32	&	18	&	0.05	&	\bf20	&	\bf0.04	\\
ftv44	&	227	&	0.47	&	101	&	0.22	&	62	&	0.13	&	584	&	1.22	&	35	&	0.10	&	\bf34	&	\bf0.09	\\
ftv47	&	471	&	0.89	&	144	&	0.34	&	247	&	0.43	&	648	&	1.55	&	\bf81	&	\bf0.24	&	105	&	0.32	\\
ftv55	&	2,155	&	4.22	&	614	&	1.24	&	596	&	1.32	&	2,580	&	5.00	&	\bf54	&	\bf0.24	&	90	&	0.33	\\
ftv64	&	2,111	&	7.22	&	1,724	&	4.34	&	695	&	1.90	&	12,665	&	25.02	&	\bf104	&	\bf0.57	&	115	&	0.64	\\
ftv70	&	5,992	&	25.43	&	6,294	&	19.08	&	2,936	&	8.79	&	68,673	&	144.63	&	\bf88	&	\bf0.59	&	151	&	0.88	\\
kro124p	&	5,670	&	53.09	&	1,742	&	7.52	&	1,671	&	8.32	&	8,845	&	37.50	&	\bf158	&	\bf1.31	&	184	&	1.43	\\
ftv170	&	85,244	&	\TL	&	342,691	&	\TL	&	391,842	&	\TL	&	411,955	&	\TL	&	23,457	&	275.33	&	\bf14,331	&	\bf155.36	\\
p43	&	990,440	&	\TL	&	2,681,727	&	\TL	&	2,569,776	&	\TL	&	2,546,104	&	\TL	&	1,383,073	&	1,628.26	&	\bf33,251	&	\bf53.48	\\
\hline
rbg323	&	3,134,515	&	\TL	&	3,343	&	26.90	&	\bf262	&	\bf3.09	&	\ML	&	\ML	&	563	&	10.55	&	339	&	13.76	\\
rbg358	&	2,636,522	&	\TL	&	\ML	&	\ML	&	\bf268	&	\bf2.72	&	\ML	&	\ML	&	643	&	7.95	&	284	&	4.47	\\
rbg403	&	34,132	&	\TL	&	\ML	&	\ML	&	316	&	\bf10.34	&	\ML	&	\ML	&	1,000	&	46.10	&	\bf313	&	18.90	\\
rbg443	&	5,596	&	\TL	&	\ML	&	\ML	&	\bf339	&	\bf13.59	&	\ML	&	\ML	&	1,121	&	62.18	&	342	&	26.83	\\
\hline
\end{tabular}
\scriptsize \caption{ Search heuristics comparison on ATSP instances from the TSPLIB, with a time limit (\TL) of 1,800 seconds and a memory limit (\ML) of 3 Go.
\label{ATSP_heur}}
\begin{tabular}{| l | rr | rr | rr | rr | rr | rr |}
\hline
	&	\multicolumn{2}{c|}{BASIC}	&	\multicolumn{2}{c|}{ARB}		&	\multicolumn{2}{c|}{POS}	&	\multicolumn{2}{c|}{AD}	&	\multicolumn{2}{c|}{BST}	&	\multicolumn{2}{c|}{ALL}\\
instance	&	$|$nodes$|$	&	time	&	$|$nodes$|$	&	time	&	$|$nodes$|$	&	time	&	$|$nodes$|$	&	time	&	$|$nodes$|$	&	time	&	$|$nodes$|$	&	time\\
\hline
br17	&	11	&	0.13	&	11	&	0.05	&	11	&	0.06	&	11	&\bf0.04	&	11	&	0.10	&	11	&	0.08	\\
ft53	&	3	&	\bf0.10	&	2	&	0.13	&	3	&	0.12	&	3	&	0.11	&	3	&	0.13	&	\bf1	&	0.15	\\
ft70	&	16	&	0.13	&	13	&\bf0.01	&	16	&	0.15	&	\bf10	&	0.12	&	10	&	0.26	&	13	&	0.35	\\
ry48p&	50	&	0.20	&	56	&	0.20	&	50	&\bf0.19	&	75	&	0.26	&	49	&	0.26	&	\bf41	&	0.23	\\
ftv33	&	2	&	0.01	&	2	&	0.01	&	2	&	0.01	&	2	&	0.01	&	2	&	0.01	&	2	&	0.01	\\
ftv35	&	27	&	0.05	&	22	&	0.06	&	27	&	0.05	&	\bf8	&\bf0.03	&	24	&	0.06	&	\bf8	&	0.04	\\
ftv38	&	20	&\bf0.04	&	20	&	0.05	&	20	&	0.05	&	\bf14	&\bf0.04	&	20	&	0.06	&	\bf14	&	0.06	\\
ftv44	&	34	&	0.09	&	34	&\bf0.08	&	34	&	0.09	&	37	&	0.09	&	32	&	0.10	&	\bf27	&	0.15	\\
ftv47	&	105	&	0.32	&	127	&	0.39	&	105	&	0.32	&	67	&\bf0.19	&	93	&	0.37	&	\bf60	&	0.24	\\
ftv55	&	90	&	0.33	&	99	&	0.35	&	90	&	0.35	&	\bf54	&\bf0.24	&	74	&	0.38	&	67	&	0.36	\\
ftv64	&	115	&	0.64	&	113	&\bf0.56	&	115	&	0.60	&	124	&	0.65	&	\bf96	&	0.66	&	123	&	0.76	\\
ftv70	&	151	&	0.88	&	132	&	0.85	&	137	&	0.87	&	105	&\bf0.64	&	132	&	1.01	&\bf104	&	0.80	\\
kro124p&	184	&	1.43	&\bf147	&\bf1.37	&	184	&	1.43	&	223	&	1.68	&	219	&	1.94	&	150	&	1.50	\\
ftv170&	14,331	&	155.36	&	3,900	&	40.18	&	8,972	&	99.36	&	\bf2,561	&	\bf28.63	&	5,368	&	71.15	&	2,565	&	31.94	\\
p43	&	33,251	&	53.48	&	5,255	&	13.53	&	4,597	&	12.24	&	15,251	&	30.47	&	\bf1,003	&	\bf6.39	&	1,742	&	10.37	\\
\hline
rbg323	&	339	&	13.76	&	339	&	13.53	&	339	&	14.55	&	\bf267	&	\bf3.38	&	339	&	14.05	&	\bf267	&	3.59	\\
rbg358	&	284	&	\bf4.47	&	284	&	5.44	&	284	&	6.68	&	\bf283	&	5.83	&	284	&	5.45	&	\bf283	&	6.94	\\
rbg403	&	313	&	\bf18.90	&	313	&	24.83	&	313	&	21.69	&	\bf308	&	22.72	&	313	&	20.69	&	\bf308	&	23.56	\\
rbg443	&	342	&	26.83	&	342	&	\bf26.62	&	342	&	28.20	&	\bf339	&	27.63	&	342	&	26.69	&	\bf339	&	28.82	\\
\hline
\end{tabular}
\scriptsize \caption{Implied filtering comparison with EnforceSparse heuristic. Hard instances ftv170 and p43 are significantly improved by all filtering algorithms. \label{ATSP_filter}}
\end{centering}
\end{tiny}
\end{table}

\subsection{Results and analysis}

Table \ref{ATSP_heur} provides our experimental results over the impact of branching strategies.
\emph{RemoveMaxRC} can be seen as our implementation of the \emph{SOTA} model. The main differences between these two are the fact that we perform a fixpoint and implementation details of the Held and Karp scheme. As it can be seen, our Java implementation is faster and more robust (\texttt{br17}, \texttt{kro124p} and \texttt{rbg323}). 
Results clearly show that the most recently used heuristic \cite{Rousseau:WCC} is actually not very appropriate and that the more natural \emph{EnforceMaxRC} is much more efficient.
\emph{EnforceMaxRC} is in general better than \emph{RemoveMaxRC}, showing the limits of the first fail principle.
The worst heuristic is clearly \emph{RemoveMaxMC} while the best ones are \emph{EnforceMaxRC}, \emph{Sparse} and \emph{EnforceSparse}.
More precisely, graph based heuristics are the best choice for the \texttt{ftv} set of instances 
whereas \emph{EnforceMaxRC} behaves better on \texttt{rbg} instances.  
This efficiency (not a single failure for instances \texttt{rbg}) is explained by the fact that \emph{EnforceMaxRC} is driven by the MAP relaxation, which is extremely accurate here.
In contrast, \emph{Sparse} heuristic does too many decisions (because of the dichotomic branching), even if the number of wrong ones is negligible.
Results show that assignment based branchings work better than reduction heuristics. 
The most robust branching strategy is \emph{EnforceSparse}. 
Thus, graph structure can play a significant role during the search. 
However, side constraints of real world applications might require to use dedicated heuristics, so results should not entirely rely on the branching strategy.

We now study the impact of implied filtering algorithms on robustness of the model. For that, we consider the best heuristic, \emph{EnforceSparse}, and solve TSPLIB's instances within different model configurations. 
Results can be seen on Table \ref{ATSP_filter}. 
It can be seen that implied algorithms do not significantly increase performances on all instances, but it seriously improves the resolution of hard ones (\texttt{ftv170} and \texttt{p43} are solved 5 times faster by ALL). 
Moreover, those extra algorithms are not significantly time expensive. The eventual loss is more due to model instability rather than filtering algorithms' computing time.
Indeed, a stronger filtering sometimes yields to more failures because it affects both the branching heuristic and the Langrangian relaxation's convergence. 
In general, no implied propagator outperforms others, it depends on instances and branching heuristics. 
The combination of them (ALL) is not always the best model but it provides robustness at a good trade off between filtering quality and resolution time. 

\subsection{Consequences on symmetric instances}
In this section, we show the repercussion of our study on the (symmetric) Traveling Salesman Problem (TSP), which can be seen as the undirected variant of the ATSP.
For that, we use an undirected model, as in \cite{Rousseau:WCC}:  Each node has now two neighbors. 
Previously mentioned implied structural filtering algorithms are defined for directed graphs, and thus cannot be used for solving the TSP.
However, our study about search heuristics can be extended to the symmetric case. 
It is worth noticing that the \emph{Sparse} heuristic cannot be used here because the dichotomic exploration is not defined for set variables that must take two values.
We suggest to measure the impact of \emph{EnforceSparse} heuristic that we have introduced and which appeared as the best choice for solving the ATSP. 
As can be seen in Table \ref{benchTSP}, the \emph{EnforceSparse} heuristic can dramatically enhance performances on TSP instances.
While the SOTA model fails to solve most instances within the time limit of 30 minutes, our approach solves all instances that have up to 150 nodes in less than one minute (\texttt{kroB150} appart) and close half of the others that have up to 300 nodes. This seems to be the new limit of CP.
Scaling further on simple instances would be easy: the number of iterations in the Lagrangian relaxation should be decreased to get reasonable computing times, 
but solving bigger and still hard instances would require serious improvements.

\begin{table}[H]
\begin{scriptsize}
\begin{centering}
\begin{tabular}{| l | rr | rr |}
\hline
	&	\multicolumn{2}{c|}{SOTA \cite{Rousseau:WCC}} & \multicolumn{2}{c|}{EnforceSparse} \\
instance	&	$|$nodes$|$	&	time	&	$|$nodes$|$	&	time\\
\hline
eil76	&	335	&	0.40	&	\bf8	&	\bf0.06	\\
eil101	&	1,337	&	2.54	&	\bf55	&	\bf0.42	\\
pr124	&	123,938	&	443.87	&	\bf1,795	&	\bf12.57	\\
pr144	&	3,237	&	20.82	&	\bf128	&	\bf1.58	\\
pr152	&	33,602	&	224.43	&	\bf1,168	&	\bf13.96	\\
pr226	&	5,980	&	56.30	&	\bf440	&	\bf6.84	\\
pr264	&	994	&	21.37	&	\bf291	&	\bf8.42	\\
pr299	&	99,612	&	\TL	&	91,584	&	\TL	\\
gr96	&	317,005	&	714.23	&	\bf899	&	\bf3.87	\\
gr120	&	19,231	&	65.20	&	\bf239	&	\bf1.76	\\
gr137	&	427,120	&	\TL	&	\bf4,622	&	\bf24.16	\\
gr202	&	225,054	&	\TL	&	\bf2,079	&	\bf20.68	\\
gr229	&	166,297	&	\TL	&	151,159	&	\TL	\\
bier127	&	465,699	&	\TL	&	\bf460	&	\bf4.13	\\
ch130	&	595,541	&	\TL	&	\bf7,662	&	\bf40.66	\\
ch150	&	445,976	&	\TL	&	\bf593	&	\bf4.76	\\	
u159	&	188,943	&	988.25	&	\bf817	&	\bf5.86	\\
\hline
\end{tabular}
\hspace{0.5cm}
\begin{tabular}{| l | rr | rr |}
\hline
	&	\multicolumn{2}{c|}{SOTA \cite{Rousseau:WCC}} & \multicolumn{2}{c|}{EnforceSparse} \\
instance	&	$|$nodes$|$	&	time	&	$|$nodes$|$	&	time\\
\hline
kroA100	&	947,809	&	\TL	&	\bf12,973	&	\bf52.35	\\
kroA150	&	403,254	&	\TL	&	\bf7,068	&	\bf54.65	\\
kroA200	&	240,552	&	\TL	&	171,805	&	\TL	\\
kroB100	&	883,399	&	\TL	&	\bf3,223	&	\bf13.94	\\
kroB150	&	384,255	&	\TL	&	\bf95,638	&	\bf668.73	\\
kroB200	&	270,47	&	\TL	&	\bf163,894	&	\bf1772.19	\\
kroC100	&	42,690	&	82.95	&	\bf1,472	&	\bf6.55	\\
kroD100	&	3,382	&	8.71	&	\bf71	&	\bf0.34	\\
kroE100	&	702,011	&	1,343.25	&	\bf7,854	&	\bf30.73	\\
si175	&	285,444	&	\TL	&	423,760	&	\TL	\\
rat99	&	742	&	1.44	&	\bf72	&	\bf0.25	\\
rat195	&	286,495	&	\TL	&	\bf8,535	&	\bf79.48	\\
d198	&	235,513	&	\TL	&	\bf3,079	&	\bf32.48	\\
ts225	&	195,217	&	\TL	&	113,827	&	\TL	\\
tsp225	&	207,868	&	\TL	&	170,906	&	\TL	\\
gil262	&	175,408	&	\TL	&	145,000	&	\TL	\\
a280	&	148,522	&	\TL	&	\bf34,575	&	\bf305.96	\\
\hline
\end{tabular}
\caption{Impact of EnforceSparse heuristic on medium size TSP instances of the TSPLIB.
Comparison with state of the art best CP results (column SOTA) \cite{Rousseau:WCC} with a time limit (\TL) of 1,800 seconds. 
\label{benchTSP}}
\end{centering}
\end{scriptsize}
\end{table}

\section{Conclusion}\label{conclusion}

We have provided a short survey over solving the ATSP in CP and shown how general graph properties, standing from the consideration of the reduced graph, could improve existing models, such as the Minimum Spanning Tree relaxation. As future work, this could be extended to scheduling oriented TSP (TSPTW for instance) since \RP finds some sets of precedences in linear time.

We also provided some implementation guidelines to have efficient algorithms, including the Held and Karp procedure.
We have shown that our model outperforms the current state of the art CP model for solving both TSP and ATSP, pushing further the limit of CP. 
Our experiments enable us to state that graph structure has a serious impact on resolution: not only cost matters.
More precisely, the \emph{EnforceSparse} heuristic provides impressive results while implied structural filtering improves robustness for a negligible time consumption. 

We also pointed out the fact that non monotonicity of the Lagrangian relaxation could make implied filtering decrease performances. An interesting future work would be to introduce some afterglow into the Held and Karp method: when the tree based relaxation is applied, it first performs a few iterations allowing, but penalizing, the use of arcs that have been removed since the last call of the constraint. This smoothing could make the convergence easier and thus lead to better results. 

\section*{Acknowledgement}\label{acknowledgement}
The authors thank Pascal Benchimol and Louis-Martin Rousseau for interesting talks and having provided their C++ implementation as well as Charles Prud'Homme for useful implementation advise. They are also grateful to the regional council of Pays de la Loire for its financial support.

\bibliographystyle{plain}

\end{document}